\newtheorem{definition}{Definition}
\newtheorem{corollary}{Corollary}
\newtheorem{proposition}{Proposition}
\newtheorem{theorem}{Theorem}
\newtheorem{lemma}{Lemma}
\newtheorem{remark}{Remark}
\newtheorem{example}{Example}
\begin{document}

\sloppy

\title{On the Polarization Levels of Automorphic-Symmetric Channels}

\author{
  \IEEEauthorblockN{Rajai Nasser\\}
  \IEEEauthorblockA{
Email: rajai.nasser@alumni.epfl.ch} 
}



\maketitle

\begin{abstract}
It is known that if an Abelian group operation is used in an Ar{\i}kan-style construction, we have multilevel polarization where synthetic channels can approach intermediate channels that are neither almost perfect nor almost useless. An open problem in polarization theory is to determine the polarization levels of a given channel. In this paper, we discuss the polarization levels of a family of channels that we call automorphic-symmetric channels. We show that the polarization levels of an automorphic-symmetric channel are determined by characteristic subgroups. In particular, if the group that is used does not contain any non-trivial characteristic subgroup, we only have two-level polarization to almost perfect and almost useless channels.
\end{abstract}

\section{Introduction}
Polar codes are the first family of low-complexity capacity-achieving codes. Polar codes were first introduced by Ar{\i}kan for binary-input channels \cite{Arikan}. The construction of polar codes relies on a phenomenon that is called \emph{polarization}: A collection of independent copies of the channel is transformed into a collection of synthetic channels that are almost perfect or almost useless.

The transformation of Ar{\i}kan for binary-input channels uses the XOR operation. The polarization phenomenon was generalized to channels with non-binary input by replacing the XOR operation with a binary operation on the input-alphabet \cite{SasogluTelAri,SasS,ParkBarg,SahebiPradhan,RajTelA,RajErgI,RajErgII}. Note that if the input alphabet size is not prime, we may have multilevel polarization where the synthetic channels can polarize to intermediate channels that are neither almost perfect nor almost useless. In this paper, we are interested in the multilevel polarization phenomenon when an Abelian group operation is used. More precisely, we are interested in determining the polarization levels of a family of channels that we call automorphic-symmetric channels.

In Section \ref{sec2}, we introduce the preliminaries of this paper. In Section \ref{sec3} we introduce $\mathcal{H}$-polarizing and strongly-polarizing families of channels. We show that if $\mathcal{W}$ is an $\mathcal{H}$-polarizing family of channels, then the polarization levels of every channel in $\mathcal{W}$ are determined by subgroups in $\mathcal{H}$. In Section \ref{sec4} we show that the family of $q$-ary erasure channels is strongly polarizing. This implies that every $q$-ary erasure channel polarizes to almost perfect and almost useless channels. In Section \ref{sec5} we introduce $q$-symmetric channels and generalized $q$-symmetric channels. $q$-symmetric channels generalize binary symmetric channels to arbitrary input alphabets. Generalized $q$-symmetric channels are a generalization of binary-input memoryless symmetric-output (BMS) channels. In Section \ref{sec6}, we introduce the family of automorphic-symmetric channels. We show that generalized $q$-symmetric channels are automorphic-symmetric. We show that the polarization levels of an automorphic-symmetric channel are determined by characteristic subgroups. This implies that if the group that is used does not contain any non-trivial characteristic subgroup, we only have two-level polarization to almost perfect and almost useless channels.

\section{Preliminaries}

\label{sec2}

Throughout this paper, $(G,+)$ denotes a fixed finite Abelian group, and $q=|G|$ denotes its size.

Let $\mathcal{Y}$ be a finite set. We write $W:G\longrightarrow \mathcal{Y}$ to denote a discrete memoryless channel (DMC) of input alphabet $G$ and output alphabet $\mathcal{Y}$. We write $I(W)$ to denote the symmetric capacity\footnote{The symmetric capacity is the mutual information between a uniformly distributed input and its corresponding output.} of $W$.

Define the two channels $W^-:G\longrightarrow\mathcal{Y}^2$ and $W^+:G\longrightarrow\mathcal{Y}^2\times G$ as follows:
$$W^-(y_1,y_2|u_1)=\frac{1}{q}\sum_{u_2\in G}W(y_1|u_1+u_2)W(y_2|u_2),$$
and
$$W^+(y_1,y_2,u_1|u_2)=\frac{1}{q}W(y_1|u_1+u_2)W(y_2|u_2).$$
Furthermore, for every $n\geq 1$ and every $s=(s_1,\ldots,s_n)\in\{-,+\}^n$, define the channel $W^s=(\ldots(W^{s_1})^{s_2}\ldots)^{s_n}$.

Now let $H$ be a subgroup of $(G,+)$. We denote by $G/H$ the quotient group of $G$ by $H$. Define the channel $W[H]:G/H\longrightarrow \mathcal{Y}$ as follows:
$$W[H](y|A)=\frac{1}{|A|}\sum_{x\in A}W(y|x)=\frac{1}{|H|}\sum_{x\in A}W(y|x).$$
It is easy to see that if $X$ is a uniformly distributed random variable in $G$ and $Y$ is the output of $W$ when $X$ is the input, then $I(W[H])=I(X\bmod H;Y)$.

\begin{definition}
Let $\delta>0$. A channel $W:G\longrightarrow\mathcal{Y}$ is said to be $\delta$-determined by a subgroup $H$ of $G$ if $\big|I(W)-\log|G/H|\big|<\delta$ and $\big|I(W[H])-\log|G/H|\big|<\delta$.
\end{definition}

The inequalities $\big|I(W)-\log|G/H|\big|<\delta$ and $\big|I(W[H])-\log|G/H|\big|<\delta$ can be interpreted as follows: Let $X$ be a uniformly distributed random variable in $G$ and let $Y$ be the output of the channel $W$ when $X$ is the input. If $\delta>0$ is small and $\big|I(W[H])-\log|G/H|\big|<\delta$, then from $Y$ we can determine $X\bmod H$ with high probability. If we also have $\big|I(W)-\log|G/H|\big|<\delta$, then $X\bmod H$ is almost the only information about $X$ which can be reliably deduced from $Y$. This is why we can say that if $W$ is $\delta$-determined by $H$ for a small $\delta$, then $W$ behaves similarly to a deterministic homomorphism channel projecting its input onto $G/H$.

It was proven in \cite{RajTelA} that as the number $n$ of polarization steps becomes large, the synthetic channels $(W^s)_{s\in\{-,+\}^n}$ polarize to deterministic homomorphism channels projecting their input onto quotient groups. More precisely, for every $\delta>0$, we have
\begin{align*}
\lim_{n\to\infty}\frac{1}{2^n}\Big|\Big\{s\in\{-,+\}^n:\;&\exists H_s\;\text{a subgroup of}\;G,\\
&W^s\;\text{is}\;\delta\text{-determined by}\;H_s\Big\}\Big|=1.
\end{align*}

\section{$\mathcal{H}$-Polarizing Families of Channels}

\label{sec3}

\begin{definition}
Let $\mathcal{H}$ be a set of subgroups of $(G,+)$. We say that a channel $W:G\longrightarrow\mathcal{Y}$ \emph{$\mathcal{H}$-polarizes} if for every $\delta>0$, we have
\begin{align*}
\lim_{n\to\infty}\frac{1}{2^n}\Big|\Big\{s\in\{-,+\}^n:\;&\exists H_s\in\mathcal{H},\\
&W^s\;\text{is}\;\delta\text{-determined by}\;H_s\Big\}\Big|=1.
\end{align*}

If $\mathcal{H}=\{\{0\},G\}$ and $W$ $\mathcal{H}$-polarizes, we say that $W$ \emph{strongly polarizes}.
\end{definition}
If $W$ $\mathcal{H}$-polarizes, then the levels of polarization are determined by subgroups in $\mathcal{H}$. $W$ strongly polarizes if and only if its synthetic channels $(W^s)_{s\in\{-,+\}^n}$ polarize only to almost useless and almost perfect channels.

Let $W:G\longrightarrow\mathcal{Y}$ be a given channel, and assume that after simulating enough polarization steps, we are convinced that $W$ $\mathcal{H}$-polarizes for some family $\mathcal{H}$ of subgroups. How can we prove that this is indeed the case? Characterizing $\mathcal{H}$-polarizing channels seems to be very difficult. In this paper, we aim to provide sufficient conditions for $\mathcal{H}$-polarization.

Our approach to show the $\mathcal{H}$-polarization of a channel, is to show that it belongs to what we call \emph{$\mathcal{H}$-polarizing family of channels}:

\begin{definition}
A family $\mathcal{W}$ of channels with input alphabet $G$ is said to be \emph{$\mathcal{H}$-polarizing} if it satisfies the following conditions:
\begin{itemize}
\item If $W\in\mathcal{W}$, then $W^-\in\mathcal{W}$ and $W^+\in\mathcal{W}$.\footnote{This implies that $W^s\in\mathcal{W}$ for every $n\geq 1$ and every $s\in\{-,+\}^n$.}
\item There exists $\delta_{\mathcal{W},\mathcal{H}}>0$ such that $\mathcal{W}$ does not contain any channel that is $\delta_{\mathcal{W},\mathcal{H}}$-determined by a subgroup other than those in $\mathcal{H}$.
\end{itemize}
\end{definition}

\begin{proposition}
\label{propMain}
Let $\mathcal{H}$ be a family of subgroups and let $\mathcal{W}$ be an $\mathcal{H}$-polarizing family of channels. Every channel $W\in\mathcal{W}$ $\mathcal{H}$-polarizes.
\end{proposition}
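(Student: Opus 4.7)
The plan is to combine the general polarization theorem from \cite{RajTelA} recalled in Section~\ref{sec2} with the two defining properties of an $\mathcal{H}$-polarizing family. The structure of the argument is: use closure of $\mathcal{W}$ to ensure that every synthetic channel still lies in $\mathcal{W}$, then invoke the second bullet of the definition to rule out all subgroups outside $\mathcal{H}$ as possible ``polarization targets''.

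First I would fix an arbitrary $\delta>0$ and set $\delta' = \min(\delta,\delta_{\mathcal{W},\mathcal{H}})$, where $\delta_{\mathcal{W},\mathcal{H}}>0$ is the constant guaranteed by the second condition in the definition of an $\mathcal{H}$-polarizing family. Note that if a channel is $\delta'$-determined by some subgroup $H$, then trivially it is also $\delta$-determined by $H$, since $\delta'\le\delta$. So it suffices to show that
\begin{align*}
\lim_{n\to\infty}\frac{1}{2^n}\Big|\Big\{s\in\{-,+\}^n:\;&\exists H_s\in\mathcal{H},\\
&W^s\;\text{is}\;\delta'\text{-determined by}\;H_s\Big\}\Big|=1.
\end{align*}

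Next, I would apply the general polarization theorem recalled in Section~\ref{sec2} with parameter $\delta'$. It yields that the fraction of indices $s\in\{-,+\}^n$ for which there exists \emph{some} subgroup $H_s$ of $G$ such that $W^s$ is $\delta'$-determined by $H_s$ tends to $1$ as $n\to\infty$. By the first bullet of the definition of an $\mathcal{H}$-polarizing family, $W\in\mathcal{W}$ implies $W^s\in\mathcal{W}$ for all $s\in\{-,+\}^n$. Then, since $\delta'\le\delta_{\mathcal{W},\mathcal{H}}$ and $W^s\in\mathcal{W}$, the second bullet forces any subgroup $H_s$ that $\delta'$-determines $W^s$ to lie in $\mathcal{H}$. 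This gives the required limit and completes the proof.

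There is really no hard step here; the proposition is essentially a packaging of the definitions together with the already-proven coarse polarization result. The only point that deserves care is the monotonicity observation that $\delta'$-determination implies $\delta$-determination, which lets us use the single value $\delta' = \min(\delta,\delta_{\mathcal{W},\mathcal{H}})$ to simultaneously engage both the polarization theorem and the exclusion guarantee built into the definition of $\mathcal{W}$.
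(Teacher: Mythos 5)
Your proof is correct and follows essentially the same route as the paper's: invoke the coarse polarization result from \cite{RajTelA}, use closure of $\mathcal{W}$ under $-$ and $+$ to keep every $W^s$ in $\mathcal{W}$, and use the exclusion constant $\delta_{\mathcal{W},\mathcal{H}}$ to force each determining subgroup into $\mathcal{H}$. Your explicit handling of the monotonicity of $\delta$-determination via $\delta'=\min(\delta,\delta_{\mathcal{W},\mathcal{H}})$ is slightly more careful than the paper, which simply restricts to $0<\delta<\delta_{\mathcal{W},\mathcal{H}}$ and leaves the extension to larger $\delta$ implicit.
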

\begin{proof}
Fix $W\in\mathcal{W}$ and let $0<\delta<\delta_{\mathcal{W},\mathcal{H}}$. For every $n\geq 1$, define
\begin{align*}
A_{n,\delta}=\Big\{s\in\{-,+\}^n:\;\exists H_s\;&\text{a subgroup of}\;G,\\
&W^s\;\text{is}\;\delta\text{-determined by}\;H_s\Big\}.
\end{align*}

We have $\displaystyle\lim_{n\rightarrow\infty}\frac{1}{2^n}|A_{n,\delta}|=1$. Let $s\in A_{n,\delta}$. There exists a subgroup $H_s$ of $G$ such that $W^s$ is $\delta$-determined by $H_s$. Since $W\in\mathcal{W}$, then $W^s\in\mathcal{W}$, which implies that $W^s$ cannot be $\delta$-determined by a subgroup other than those in $\mathcal{H}$. Therefore, $H_s\in\mathcal{H}$. We conclude that
\begin{align*}
\lim_{n\to\infty}\frac{1}{2^n}\Big|\Big\{s\in\{-,+\}^n:\;&\exists H_s\in\mathcal{H},\\
&W^s\;\text{is}\;\delta\text{-determined by}\;H_s\Big\}\Big|=1,
\end{align*}
which means that $W$ $\mathcal{H}$-polarizes.
\end{proof}

\section{$q$-ary Erasure Channels}
\label{sec4}

Our first example of a strongly polarizing family of channels is the family of $q$-ary erasure channels.

\begin{definition}
Let $e$ be a symbol that does not belong to $G$. We say that a channel $W:G\longrightarrow G\cup\{e\}$ is a $q$-ary erasure channel with parameter $\epsilon$ (denoted $W=qEC(\epsilon)$) if
$$W(y|x)=\begin{cases}1-\epsilon\quad&\text{if}\;y=x,\\\epsilon\quad&\text{if}\;y=e,\\0\quad\text{otherwise}.\end{cases}$$
\end{definition}

We also call $q$-ary erasure channel any channel that is \emph{equivalent} to $qEC(\epsilon)$ in the following sense:

\begin{definition}
A channel $W:G\longrightarrow\mathcal{Y}$ is said to be \emph{degraded} from another channel $W':G\longrightarrow\mathcal{Y}'$ if there exists a channel $V':\mathcal{Y}'\longrightarrow\mathcal{Y}$ such that
$$W(y|x)=\sum_{y'\in\mathcal{Y}'}W'(y'|x)V'(y|y').$$
$W$ and $W'$ are said to be \emph{equivalent} if they are degraded from each other.
\end{definition}

Denote by $\mathcal{W}_{qEC}$ the family of all $q$-ary erasure channels.

\begin{lemma}
If $W\in \mathcal{W}_{qEC}$, then $W^-\in\mathcal{W}_{qEC}$ and $W^+\in\mathcal{W}_{qEC}$.
\label{lemqEC1}
\end{lemma}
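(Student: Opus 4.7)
The strategy is a reduction to the canonical form followed by direct computation. First I would verify that the operations $W \mapsto W^-$ and $W \mapsto W^+$ respect degradation: if $W$ is degraded from $W'$ via an intermediate channel $V' \colon \mathcal{Y}' \to \mathcal{Y}$, then $W^-$ is degraded from $(W')^-$ by applying $V'$ independently to each of the two $\mathcal{Y}'$ coordinates, and $W^+$ is degraded from $(W')^+$ by the same channel extended to act as the identity on the extra $G$-coordinate of the output. Applying this in both directions, equivalence is preserved by the $\pm$ operations, so it suffices to prove the lemma for $W = qEC(\epsilon)$.

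For the minus operation, I would expand
\[
W^-(y_1,y_2\mid u_1) \;=\; \frac{1}{q}\sum_{u_2\in G} W(y_1\mid u_1+u_2)\,W(y_2\mid u_2)
\]
in the four cases determined by whether each $y_i$ is the erasure symbol. The picture that emerges is that $(y_1,y_2)\in G\times G$ has nonzero probability only when $u_1 = y_1-y_2$, with value $(1-\epsilon)^2/q$, while every output containing at least one erasure has probability independent of $u_1$. The decoder that outputs $y_1-y_2$ when both coordinates lie in $G$ and $e$ otherwise then degrades $W^-$ to $qEC(1-(1-\epsilon)^2)=qEC(2\epsilon-\epsilon^2)$. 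Conversely, from an output of $qEC(2\epsilon-\epsilon^2)$ one can simulate $W^-$ by choosing $y_2\in G$ uniformly and setting $y_1=u_1+y_2$ on the non-erasure branch, and sampling from the fixed input-independent distribution on erasure-containing outputs on the erasure branch.

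For the plus operation, the crucial observation is that $u_1$ is itself part of the output of $W^+$, so whenever $y_1\neq e$ we recover $u_2=y_1-u_1$, and whenever $y_2\neq e$ we recover $u_2=y_2$. Hence $u_2$ is undetermined only on the event $\{y_1=e,\,y_2=e\}$, which has probability $\epsilon^2$ regardless of $u_2$. A two-sided simulation analogous to the minus case then shows that $W^+$ is equivalent to $qEC(\epsilon^2)$.

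There is no conceptual obstacle; only bookkeeping. The main care is in writing down the input-independent conditional distribution on erasure-containing outputs used by the simulator from $qEC(2\epsilon-\epsilon^2)$ to $W^-$, so that the required degradation identity is satisfied on the nose rather than only up to marginals.
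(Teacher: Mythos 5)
Your proposal is correct and takes essentially the same route as the paper, whose entire proof is the one-line assertion that if $W$ is equivalent to $qEC(\epsilon)$ then $W^-$ is equivalent to $qEC(2\epsilon-\epsilon^2)$ and $W^+$ to $qEC(\epsilon^2)$. You are simply supplying the details the paper declares ``easy to check'' --- the preservation of degradation (hence equivalence) under the $\pm$ operations, and the two-sided simulations for the canonical erasure channel --- and all of those details are sound.
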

\begin{proof}
It is easy to check that if $W$ is equivalent to $qEC(\epsilon)$, then $W^-$ is equivalent to $qEC(2\epsilon-\epsilon^2)$ and $W^+$ is equivalent to $qEC(\epsilon^2)$.
\end{proof}

\begin{lemma}
\label{lemqEC2}
There exists $\delta_{qEC}>0$ such that there is no $q$-ary erasure channel that is $\delta_{qEC}$-determined by a non-trivial\footnote{The trivial subgroups of $(G,+)$ are $\{0\}$ and $G$.} subgroup.
\end{lemma}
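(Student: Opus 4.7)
The plan is to reduce to the literal channels $qEC(\epsilon)$, compute $I(W)$ and $I(W[H])$ explicitly, and then exhibit a quantitative incompatibility between the two $\delta$-determined conditions whenever $H$ is non-trivial.

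First I would note that equivalence preserves both $I(\cdot)$ and $I(\cdot[H])$: the latter because $I(W[H]) = I(X\bmod H; Y)$ for a uniform input $X$, and by the data processing inequality applied in both directions (using the definition of equivalence) this quantity depends only on the equivalence class of $W$. So I may assume without loss of generality that $W = qEC(\epsilon)$. A direct computation of $W[H]$ from its definition, combined with the natural coset-labeling map from $G\cup\{e\}$ to $G/H \cup\{e\}$ and the inverse map that outputs a uniformly chosen representative of a given coset, identifies $W[H]$ as equivalent to a $|G/H|$-ary erasure channel with the same erasure parameter $\epsilon$. Therefore $I(W) = (1-\epsilon)\log q$ and $I(W[H]) = (1-\epsilon)\log|G/H|$.

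Suppose now, toward a contradiction, that $W$ is $\delta$-determined by some non-trivial $H$. Non-triviality gives $2 \leq |G/H| \leq q/2$, hence $\log q - \log|G/H| \geq \log 2$. The condition $\bigl|I(W[H]) - \log|G/H|\bigr| < \delta$ becomes $\epsilon\log|G/H| < \delta$, forcing $\epsilon < \delta/\log 2$. Plugging this into $\bigl|I(W) - \log|G/H|\bigr| < \delta$ via $\log q - \log|G/H| = \epsilon\log q + (1-\epsilon)\log q - \log|G/H|$ yields
$$\log 2 \;\leq\; \log q - \log|G/H| \;<\; \epsilon\log q + \delta \;<\; \delta\cdot\frac{\log(2q)}{\log 2},$$
so any $\delta_{qEC} < (\log 2)^2/\log(2q)$ does the job.

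The single step that requires any care is identifying $W[H]$ with a $|G/H|$-ary erasure channel, which amounts to writing down explicit degradations in both directions; everything else reduces to a short inequality. I therefore expect this identification to be the main (and only) obstacle, but a mild one.
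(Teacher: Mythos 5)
Your proposal is correct and follows essentially the same route as the paper: identify $W[H]$ as a $|G/H|$-ary erasure channel with the same parameter, compute $I(W)=(1-\epsilon)\log q$ and $I(W[H])=(1-\epsilon)\log|G/H|$, and derive the same threshold $(\log 2)^2/\log(2q)$ from the incompatibility of the two conditions when $H$ is non-trivial. The only differences are cosmetic (you chain the inequalities into one line and are more explicit about the reduction to the canonical $qEC(\epsilon)$ via equivalence-invariance of the mutual informations).
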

\begin{proof}
Define $\displaystyle\delta_{qEC}=\frac{(\log 2)^2}{\log (2q)}$. Let $W=qEC(\epsilon)$ be a $q$-ary erasure channel and assume there exists a non-trivial subgroup $H$ of $G$ such that $\big|I(W[H])-\log|G/H|\big|<\delta_{qEC}$. It is easy to check that $W[H]$ is a $\displaystyle\frac{q}{|H|}$-erasure channel of input alphabet $G/H$ and of parameter $\epsilon$. Moreover, we have $I(W[H])=(\log|G/H|)(1-\epsilon)$. Now since $\big|I(W[H])-\log|G/H|\big|<\delta_{qEC}=\frac{(\log 2)^2}{\log (2q)}$, we have
$$\resizebox{0.48\textwidth}{!}{$\displaystyle\epsilon\log|G/H| <\frac{(\log 2)^2}{\log (2q)}\;\stackrel{(a)}{\Rightarrow}\; \epsilon<\frac{(\log2)^2}{\log (2q)\log|G/H|}\leq\frac{\log 2}{\log (2q)},$}$$
where (a) follows from the fact that $H$ is non-trivial (and hence $|G/H|\geq 2$). Thus,
\begin{align*}
I(W)-\log|G/H|&=(\log q)(1-\epsilon)-\log\frac{q}{|H|}\\
&= \log |H| -\epsilon\log q\\
&\stackrel{(a)}{\geq} \log 2 -  \epsilon\log q\\
&> \log 2 - 
\frac{(\log q)(\log2)}{\log 2 + \log q}\\
&=\frac{(\log 2)^2}{\log (2q)}=\delta_{qEC},
\end{align*}
where (a) follows from the fact that $H$ is non-trivial (and hence $|H|\geq 2$). Therefore, we cannot have $\big|I(W)-\log|G/H|\big|<\delta_{qEC}$.

We conclude that if $W$ is a channel with input alphabet $G$ such that there exists a non-trivial subgroup $H$ of $G$ satisfying $\big|I(W)-\log|G/H|\big|<\delta_{qEC}$ and $\big|I(W[H])-\log|G/H|\big|<\delta_{qEC}$, then $W\notin\mathcal{W}_{qEC}$.
\end{proof}

\begin{proposition}
\label{propqEC}
$\mathcal{W}_{qEC}$ is a strongly polarizing family of channels.
\end{proposition}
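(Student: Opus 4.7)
The plan is to verify the two defining conditions of an $\mathcal{H}$-polarizing family for the choice $\mathcal{H}=\{\{0\},G\}$, invoking the two preceding lemmas. This is essentially a bookkeeping argument: the hard work has already been isolated into Lemmas \ref{lemqEC1} and \ref{lemqEC2}, and what remains is to check that these lemmas exactly supply the two bullets in the definition of $\mathcal{H}$-polarizing family.

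First, I would note that by Lemma \ref{lemqEC1}, for every $W\in\mathcal{W}_{qEC}$ both $W^-$ and $W^+$ are again $q$-ary erasure channels, which establishes the closure condition under the minus and plus transforms. Second, I would set $\delta_{\mathcal{W}_{qEC},\mathcal{H}}:=\delta_{qEC}$ as defined in Lemma \ref{lemqEC2}. By that lemma, no $W\in\mathcal{W}_{qEC}$ is $\delta_{qEC}$-determined by any non-trivial subgroup; that is, if $W\in\mathcal{W}_{qEC}$ is $\delta_{qEC}$-determined by some subgroup $H$, then necessarily $H\in\{\{0\},G\}=\mathcal{H}$. This is exactly the second defining condition.

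Having verified both conditions, $\mathcal{W}_{qEC}$ is an $\mathcal{H}$-polarizing family with $\mathcal{H}=\{\{0\},G\}$, which by definition means $\mathcal{W}_{qEC}$ is strongly polarizing. (As a corollary, by Proposition \ref{propMain}, every $q$-ary erasure channel strongly polarizes, so its synthetic channels converge to almost useless and almost perfect channels only.)

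Since both pieces of content have already been proved, there is no real obstacle remaining; the only thing to be careful about is notational — making sure the $\delta_{qEC}$ supplied by Lemma \ref{lemqEC2} plays the role of the $\delta_{\mathcal{W},\mathcal{H}}$ demanded by the definition, and that ``non-trivial subgroup'' in the statement of Lemma \ref{lemqEC2} is interpreted in the sense defined in the footnote, i.e., any subgroup other than $\{0\}$ and $G$, so that excluding such subgroups exactly matches membership in $\mathcal{H}=\{\{0\},G\}$.
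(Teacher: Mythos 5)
Your proof is correct and follows exactly the paper's own route: the paper also deduces the proposition directly from Lemmas \ref{lemqEC1} and \ref{lemqEC2}, with your write-up merely making explicit the matching of those lemmas to the two bullets in the definition of an $\mathcal{H}$-polarizing family with $\mathcal{H}=\{\{0\},G\}$ and $\delta_{\mathcal{W},\mathcal{H}}=\delta_{qEC}$. No gaps.
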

\begin{proof}
The proposition follows from Lemmas \ref{lemqEC1} and \ref{lemqEC2}.
\end{proof}

\begin{corollary}
Every $q$-ary erasure channel with input alphabet $G$ strongly polarizes.
\end{corollary}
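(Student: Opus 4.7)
The corollary is essentially a direct consequence of Proposition \ref{propqEC} combined with Proposition \ref{propMain}, so my plan is to assemble these two results rather than do new work. Let $W$ be any $q$-ary erasure channel with input alphabet $G$. By definition $W\in\mathcal{W}_{qEC}$.

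First I would invoke Proposition \ref{propqEC}, which asserts that $\mathcal{W}_{qEC}$ is a strongly polarizing family. Unpacking this through the definitions, ``strongly polarizing family'' means $\mathcal{H}$-polarizing with $\mathcal{H}=\{\{0\},G\}$, i.e., the family is closed under the $-$ and $+$ transforms and no member is $\delta_{\mathcal{W},\mathcal{H}}$-determined by a non-trivial subgroup for some uniform constant $\delta_{\mathcal{W},\mathcal{H}}>0$. These two properties are exactly what Lemmas \ref{lemqEC1} and \ref{lemqEC2} have already established.

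Next I would apply Proposition \ref{propMain} with $\mathcal{H}=\{\{0\},G\}$ and $\mathcal{W}=\mathcal{W}_{qEC}$. The proposition guarantees that every channel in such a family $\mathcal{H}$-polarizes. In particular, $W$ $\{\{0\},G\}$-polarizes, which by definition is precisely the statement that $W$ strongly polarizes. This concludes the argument.

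There is no real obstacle here: the corollary is a one-line deduction and the only thing to be careful about is to make the nested definitions line up correctly, namely that ``strongly polarizing family'' in the sense of Definition of $\mathcal{H}$-polarizing families with $\mathcal{H}=\{\{0\},G\}$ feeds cleanly into Proposition \ref{propMain} to yield ``strongly polarizes'' in the sense used for individual channels. Since the two notions are defined with the same $\mathcal{H}$, this identification is immediate and no additional estimate is required.
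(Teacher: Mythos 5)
Your proposal is correct and follows exactly the paper's own route: the corollary is obtained by feeding Proposition \ref{propqEC} (that $\mathcal{W}_{qEC}$ is a $\{\{0\},G\}$-polarizing family) into Proposition \ref{propMain}. The care you take in matching the nested definitions of ``strongly polarizing family'' and ``strongly polarizes'' is the only content needed, and it is handled properly.
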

\begin{proof}
The corollary follows from Propositions \ref{propMain} and \ref{propqEC}
\end{proof}

\section{$q$-Symmetric Channels and Generalized $q$-Symmetric Channels}

\label{sec5}

\begin{definition}
Let $\displaystyle 0\leq\epsilon\leq\frac{1}{q-1}$. The $q$-symmetric channel of parameter $\epsilon$ (denoted $qSC(\epsilon)$) is the channel $W:G\longrightarrow G$ defined as
$$W(y|x)=\begin{cases}1-(q-1)\epsilon\quad&\text{if}\;y=x,\\\epsilon\quad&\text{otherwise.}\end{cases}$$
\end{definition}

$q$-symmetric channels generalize the binary symmetric channels to non-binary input alphabets.

We are interested in showing the strong polarization of $q$-symmetric channels. More generally, we are interested in showing the strong polarization of a more general family of channels:

\begin{definition}
\label{defGenqSym}
We say that a channel $W:G\longrightarrow\mathcal{Y}$ is a generalized $q$-symmetric channel if there exist a set $\mathcal{Y}_W$ and a bijection $\pi_W:G\times\mathcal{Y}_W\rightarrow\mathcal{Y}$ such that:
\begin{itemize}
\item There exists a mapping $p_W:\mathcal{Y}_W\rightarrow[0,1]$ such that $\displaystyle\sum_{y'\in\mathcal{Y}_W}p_W(y')=1$, i.e., $p_W$ is a probability distribution on $\mathcal{Y}_W$.
\item For every $y'\in\mathcal{Y}_W$, there exists $\displaystyle 0\leq\epsilon_{y'}\leq\frac{1}{q-1}$ such that for every $x,x'\in G$, we have:
\begin{align*}
W(\pi_W&(x',y')|x)\\
&=\begin{cases}p_W(y')\cdot(1-(q-1)\epsilon_{y'})\quad&\text{if}\;x'=x,\\p_W(y')\cdot\epsilon_{y'}\quad&\text{otherwise}.\end{cases}
\end{align*}
\end{itemize}
\end{definition}

Generalized $q$-symmetric channels generalize binary memoryless symmetric-output (BMS) channels.

\begin{example}
$qSC(\epsilon)$ is a generalized $q$-symmetric channel: Let $\mathcal{Y}_W=\{0\}$, define $\pi_W:G\times\mathcal{Y}_W\rightarrow G$ as $\pi_W(x,0)=x$, and define $p_W(0)=1$.
\end{example}

\begin{example}
Every $q$-ary erasure channel is equivalent to a generalized $q$-symmetric channel.
\end{example}

\begin{remark}
A generalized $q$-symmetric channel can be 	thought of as a combination of $q$-symmetric channels indexed by $y'\in\mathcal{Y}_W$:
\begin{itemize}
\item The channel picks $y'\in\mathcal{Y}_W$ with probability $p_W(y')$ and independently from the input.
\item The channel sends the input $x$ through a channel $qSC(\epsilon_{y'})$ and obtains $x'$.
\item The channel output is $y=\pi_W(x',y')$.
\end{itemize}
Since $\pi_W$ is a bijection, the receiver can recover $(x',y')$ from $y$. In other words, the receiver knows which $qSC$ from the collection $\{qSC(\epsilon_{y'}):\;y'\in\mathcal{Y}_W\}$ was used. Moreover, the receiver knows the $qSC$ output $x'$.
\end{remark}

The reader can check that if $W$ is a generalized $q$-symmetric channel, then $W^-$ is a generalized $q$-symmetric channel as well. Unfortunately, $W^+$ is not necessarily a generalized $q$-symmetric channel. Therefore, generalized $q$-symmetric channels do not form a strongly polarizing family of channels. In the next section, we will see that under some condition on the group $(G,+)$, generalized $q$-symmetric channels form a subfamily of a strongly polarizing family of channels.

\section{Automorphic-Symmetric Channels}
\label{sec6}
\begin{definition}
An automorphism of $G$ is an isomorphism\footnote{An isomorphism is a bijective homomorphism.} from $G$ to itself.
\end{definition}

\begin{definition}
A channel $W:G\longrightarrow\mathcal{Y}$ is said to be \emph{automorphic-symmetric with respect to $(G,+)$} if for every automorphism $f:G\rightarrow G$ there exists a bijection $\pi_f:\mathcal{Y}\rightarrow\mathcal{Y}$ such that $W(\pi_f(y)|f(x))=W(y|x)$.
\end{definition}

\begin{example}
If $G\equiv\mathbb{Z}_q$, then the identity is the only automorphism of $G$. This means that every channel is (trivially) automorphic-symmetric with respect to $\mathbb{Z}_q$.
\end{example}

Another example of automorphic-symmetric channels is generalized $q$-symmetric channels:

\begin{proposition}
\label{propSymIsAut}
Every generalized $q$-symmetric channel is automorphic-symmetric with respect to $(G,+)$.
\end{proposition}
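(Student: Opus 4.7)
The plan is to exploit the product decomposition $\mathcal{Y}\cong G\times\mathcal{Y}_W$ provided by the bijection $\pi_W$ in Definition~\ref{defGenqSym}, and to build $\pi_f$ so that it acts by $f$ on the first coordinate and by the identity on the second. Concretely, given an automorphism $f:G\to G$, I would define $\pi_f:\mathcal{Y}\to\mathcal{Y}$ by
$$\pi_f\bigl(\pi_W(x',y')\bigr) := \pi_W\bigl(f(x'),y'\bigr)$$
for every $(x',y')\in G\times\mathcal{Y}_W$. Since $\pi_W$ is a bijection and $f$ is a bijection of $G$, the composition is a well-defined bijection of $\mathcal{Y}$.

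The next step is to verify the required identity $W(\pi_f(y)|f(x)) = W(y|x)$. Writing an arbitrary output as $y=\pi_W(x',y')$, the left-hand side becomes $W(\pi_W(f(x'),y')|f(x))$. By Definition~\ref{defGenqSym} this equals $p_W(y')\cdot(1-(q-1)\epsilon_{y'})$ when $f(x')=f(x)$ and $p_W(y')\cdot\epsilon_{y'}$ otherwise. Since $f$ is a bijection of $G$, the condition $f(x')=f(x)$ is equivalent to $x'=x$, so the expression reduces to
$$W(\pi_W(x',y')|x) \;=\; W(y|x),$$
as desired.

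There is essentially no obstacle here: the key observation is that the transition probability $W(\pi_W(x',y')|x)$ depends on $(x,x')$ only through the indicator $\mathds{1}_{x=x'}$, so \emph{any} bijection of $G$ (whether or not it respects the group operation) already induces a channel symmetry via the above construction. In particular, the automorphism property of $f$ is not even needed, and the claim follows immediately. I would therefore present the proof as a direct calculation in two short displays: first the definition of $\pi_f$, then the chain of equalities above.
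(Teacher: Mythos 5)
Your proposal is correct and follows essentially the same route as the paper: both define $\pi_f=\pi_W\circ g_f\circ\pi_W^{-1}$ with $g_f(x',y')=(f(x'),y')$ and verify the symmetry identity by the same direct computation. Your added observation that only the bijectivity of $f$ (not the homomorphism property) is used here is accurate, though the automorphism structure is of course what matters elsewhere in the paper (e.g., in Lemma~\ref{lemAutPol1Step}).
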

\begin{proof}
Let $W:G\longrightarrow\mathcal{Y}$ be a generalized $q$-symmetric channel. Let $\mathcal{Y}_W$, $\pi_W$ and $p_W$ be as in Definition \ref{defGenqSym}.

Let $f:G\rightarrow G$ be an automorphism. Define $\pi_f:\mathcal{Y}\rightarrow\mathcal{Y}$ as $\pi_f=\pi_W\circ g_f\circ \pi_W^{-1}$, where $g_f:G\times\mathcal{Y}_W\rightarrow G\times\mathcal{Y}_W$ is defined as
$$g_f(x',y')=(f(x'),y').$$

Let $x\in G$ and $y\in\mathcal{Y}$. Define $(x',y')=\pi_W^{-1}(y)$. We have
\begin{align*}
W(\pi_f(y)|f(x))&=W\big((\pi_W\circ g_f\circ \pi_W^{-1})(y)\big|f(x)\big)\\
&=W(\pi_W(g(x',y'))|f(x))\\
&=W(\pi_W(f(x'),y')|f(x))\\
&\stackrel{(a)}{=}W(\pi_W(x',y')|x)=W(y|x),
\end{align*}
where (a) follows from the definition of generalized $q$-symmetric channels.
\end{proof}

\begin{definition}
Let $H$ be a subgroup of $G$. We say that $H$ is a \emph{characteristic subgroup} of $G$ if $f(H)=H$ for every automorphism $f$ of $G$. A subgroup that is not characteristic is said to be non-characteristic.

We denote the family of characteristic subgroups of $(G,+)$ by $\mathcal{H}_{ch}(G)$.
\end{definition}

In the rest of this section, we will show that automorphic-symmetric channels form an $\mathcal{H}_{ch}(G)$-polarizing family of channels.

\begin{lemma}
\label{lemAutPol1Step}
If $W:G\longrightarrow\mathcal{Y}$ is automorphic-symmetric, then $W^-$ and $W^+$ are automorphic-symmetric as well.
\end{lemma}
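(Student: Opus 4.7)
The plan is to exhibit explicit bijections on the output alphabets of $W^-$ and $W^+$ that verify the automorphic-symmetry condition, built directly from the given bijection $\pi_f$ associated to $W$. Fix an automorphism $f:G\rightarrow G$, and let $\pi_f:\mathcal{Y}\rightarrow\mathcal{Y}$ be the bijection guaranteed by the automorphic-symmetry of $W$, so that $W(\pi_f(y)|f(x))=W(y|x)$ for all $x\in G$ and $y\in\mathcal{Y}$.

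For $W^-:G\longrightarrow\mathcal{Y}^2$, I would define $\pi^-_f:\mathcal{Y}^2\rightarrow\mathcal{Y}^2$ by $\pi^-_f(y_1,y_2)=(\pi_f(y_1),\pi_f(y_2))$, which is a bijection since $\pi_f$ is. To verify $W^-(\pi^-_f(y_1,y_2)|f(u_1))=W^-(y_1,y_2|u_1)$, I would expand the definition of $W^-$ and then reindex the summation over $u_2\in G$ by the substitution $u_2\mapsto f(u_2)$, which is legitimate because $f$ is a bijection of $G$. The homomorphism property $f(u_1)+f(u_2)=f(u_1+u_2)$ then converts each factor $W(\pi_f(y_i)|f(\cdot))$ into $W(y_i|\cdot)$ using the symmetry of $W$, recovering $W^-(y_1,y_2|u_1)$.

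For $W^+:G\longrightarrow\mathcal{Y}^2\times G$, the natural candidate is $\pi^+_f:\mathcal{Y}^2\times G\rightarrow\mathcal{Y}^2\times G$ defined by $\pi^+_f(y_1,y_2,u_1)=(\pi_f(y_1),\pi_f(y_2),f(u_1))$, which is a bijection because $\pi_f$ and $f$ are both bijections. No summation has to be reindexed here; the verification is a direct calculation using $f(u_1)+f(u_2)=f(u_1+u_2)$ and applying the automorphic-symmetry of $W$ to each of the two factors defining $W^+$.

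I do not expect a real obstacle: the only nontrivial point is that we must use both defining properties of an automorphism, namely that $f$ is a bijection of $G$ (so the substitution $u_2\mapsto f(u_2)$ merely permutes the summation index in the $W^-$ case) and that $f$ preserves the group operation (so the shifted input $f(u_1)+u_2$ can be rewritten as $f$ applied to a single group element). If $f$ were only a bijection, the argument for $W^-$ would still go through but the one for $W^+$ would fail, and conversely if $f$ were only a homomorphism, neither construction would be well defined. Both conditions together make the verifications routine.
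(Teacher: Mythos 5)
Your proof is correct and essentially identical to the paper's: the same bijections $\pi_f^-(y_1,y_2)=(\pi_f(y_1),\pi_f(y_2))$ and $\pi_f^+(y_1,y_2,u_1)=(\pi_f(y_1),\pi_f(y_2),f(u_1))$, the same reindexing $u_2\mapsto f(u_2)$ for $W^-$, and the same direct computation for $W^+$. One minor quibble with your closing aside: if $f$ were merely a bijection and not a homomorphism, the $W^-$ argument would \emph{also} fail, since after reindexing the first factor is $W(\pi_f(y_1)\,|\,f(u_1)+f(u_2))$ and you still need $f(u_1)+f(u_2)=f(u_1+u_2)$ in order to apply the symmetry of $W$ to it.
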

\begin{proof}
Let $f:G\rightarrow G$ be an automorphism and let $\pi_f:\mathcal{Y}\rightarrow\mathcal{Y}$ be a bijection satisfying $W(\pi_f(y)|f(x))=W(y|x)$. Define $\pi_f^-:\mathcal{Y}^2\rightarrow\mathcal{Y}^2$ and $\pi_f^+:\mathcal{Y}^2\times G\rightarrow\mathcal{Y}^2\times G$ as follows:
$$\pi_f^-(y_1,y_2)=(\pi_f(y_1),\pi_f(y_2)),$$
$$\pi_f^+(y_1,y_2,u_1)=(\pi_f(y_1),\pi_f(y_2),f(u_1)).$$

Obviously, $\pi_f^-$ and $\pi_f^+$ are bijections. Moreover, we have:
\begin{align*}
W^-&(\pi_f^-(y_1,y_2)|f(u_1))\\
&=W^-(\pi_f(y_1),\pi_f(y_2)|f(u_1))\\
&=\frac{1}{q}\sum_{u_2\in G}W(\pi_f(y_1)|f(u_1)+u_2)W(\pi_f(y_2)|u_2)\\
&\stackrel{(a)}{=}\frac{1}{q}\sum_{u_2\in G}W(\pi_f(y_1)|f(u_1)+f(u_2))W(\pi_f(y_2)|f(u_2))\\
&\stackrel{(b)}{=}\frac{1}{q}\sum_{u_2\in G}W(\pi_f(y_1)|f(u_1+u_2))W(\pi_f(y_2)|f(u_2))\\
&=\frac{1}{q}\sum_{u_2\in G}W(y_1|u_1+u_2)W(y_2|u_2)\\
&=W^-(y_1,y_2|u_1),
\end{align*}
where (a) and (b) follow from the fact that $f$ is an automorphism. This shows that $W^-$ is automorphic-symmetric. On the other hand, we have
\begin{align*}
W^+(&\pi_f^+(y_1,y_2,u_1)|f(u_2))\\
&=W^+(\pi_f(y_1),\pi_f(y_2),f(u_1)|f(u_2))\\
&=\frac{1}{q}W(\pi_f(y_1)|f(u_1)+f(u_2))W(\pi_f(y_2)|f(u_2))\\
&=\frac{1}{q}W(\pi_f(y_1)|f(u_1+u_2))W(\pi_f(y_2)|f(u_2))\\
&=\frac{1}{q}W(y_1|u_1+u_2)W(y_2|u_2)\\
&=W^+(y_1,y_2,u_1|u_2).
\end{align*}
This shows that $W^+$ is automorphic-symmetric as well.
\end{proof}

\begin{lemma}
\label{lemDetAut}
Let $\delta>0$. If $W$ is an automorphic-symmetric channel which is $\delta$-determined by a subgroup $H$ of $G$, then $W$ is $\delta$-determined by $f(H)$ for every automorphism $f:G\rightarrow G$.
\end{lemma}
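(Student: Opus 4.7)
The plan is to reduce the claim to the single identity $I(W[f(H)]) = I(W[H])$. Once this is established, the conclusion is immediate: since $f$ is a bijection and a homomorphism, $f(H)$ is a subgroup with $|f(H)|=|H|$, hence $\log|G/f(H)|=\log|G/H|$. Consequently the condition $\bigl|I(W)-\log|G/f(H)|\bigr|<\delta$ coincides with the hypothesis $\bigl|I(W)-\log|G/H|\bigr|<\delta$, and $\bigl|I(W[f(H)])-\log|G/f(H)|\bigr|<\delta$ follows from $\bigl|I(W[H])-\log|G/H|\bigr|<\delta$ via the identity above.

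To set the identity up, I would first observe that $f$ descends to a well-defined group isomorphism $\bar{f}:G/H\to G/f(H)$ given by $\bar{f}(x+H)=f(x)+f(H)$; well-definedness and bijectivity are immediate from $f$ being an automorphism. In particular $\bar{f}$ is a bijection between the input alphabets of the two channels $W[H]$ and $W[f(H)]$.

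Next, to prove $I(W[f(H)])=I(W[H])$, I would invoke the probabilistic characterization recalled in Section~\ref{sec2}: with $X$ uniform on $G$ and $Y\sim W(\cdot|X)$, we have $I(W[H])=I(X\bmod H;Y)$. Define $X'=f(X)$ and $Y'=\pi_f(Y)$. A short calculation using the automorphic-symmetry identity $W(\pi_f(y)|f(x))=W(y|x)$, together with the fact that $f$ permutes $G$ and $\pi_f$ permutes $\mathcal{Y}$, shows that $(X',Y')$ has the same joint distribution as $(X,Y)$; in particular $X'$ is uniform on $G$ and $Y'\sim W(\cdot|X')$. Therefore $I(W[f(H)])=I(X'\bmod f(H);Y')=I(f(X)\bmod f(H);\pi_f(Y))$. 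Since $(X\bmod H,Y)\mapsto(\bar{f}(X\bmod H),\pi_f(Y))$ is a bijection of pairs, mutual information is preserved and the last quantity equals $I(X\bmod H;Y)=I(W[H])$.

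The main obstacle—if there is one—is purely bookkeeping: verifying that the push-forward of $(X,Y)$ under $(f,\pi_f)$ has the asserted joint law. This boils down to a single substitution in the automorphic-symmetry identity combined with the uniformity of $X$. All other ingredients (subgroup cardinalities, the induced bijection on cosets, invariance of mutual information under bijective relabelings) are standard and require no computation.
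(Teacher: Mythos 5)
Your proposal is correct and follows essentially the same route as the paper: both reduce the claim to the identity $I(W[f(H)])=I(W[H])$ via the characterization $I(W[H])=I(X\bmod H;Y)$, use the automorphic-symmetry relation to match joint distributions, and invoke invariance of mutual information under the bijections $\bar{f}$ and $\pi_f$ (the paper just applies the inverse maps $f^{-1},\pi_f^{-1}$ where you apply $f,\pi_f$). The closing observation that $|G/f(H)|=|G/H|$ transfers $\delta$-determination is likewise identical.
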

\begin{proof}
Let $W:G\longrightarrow\mathcal{Y}$ be an automorphic-symmetric channel, let $f:G\rightarrow G$ be an automorphism, and let $H$ be a subgroup of $G$.

For every coset $A\in G/H$, define $$f(A)=\{f(x):\;x\in A\}.$$ It is easy to see that $f(A)\in G/f(H)$. Moreover, the reader can check that the mapping $f:G/H\rightarrow G/f(H)$ is an isomorphism of groups.

Now let $X$ be a uniformly distributed random variable in $G$ and let $Y$ be the output of the channel $W$ when $X$ is the input. For every $(x,y)\in G\times\mathcal{Y}$, we have $\displaystyle\mathbb{P}_{X,Y}(x,y)=\frac{1}{q}W(y|x)$. Therefore, for every $A\in G/H$, we have
\begin{equation}
\label{eqAutSym}
\begin{aligned}
&\mathbb{P}_{f^{-1}(X\bmod f(H)),\pi_f^{-1}(Y)}(A,y)\\
&\;\;=\mathbb{P}_{X\bmod f(H),Y}(f(A),\pi_f(y))=\sum_{x\in f(A)}\mathbb{P}_{X,Y}(x,\pi_f(y))\\
&\;\;=\sum_{x\in A}\mathbb{P}_{X,Y}(f(x),\pi_f(y))=\sum_{x\in A}\frac{1}{q}W(\pi_f(y)|f(x))\\
&\;\;\stackrel{(a)}{=}\sum_{x\in A}\frac{1}{q}W(y|x)=\sum_{x\in A}\mathbb{P}_{X,Y}(x,y)=\mathbb{P}_{X\bmod H,Y}(A,y),
\end{aligned}
\end{equation}
where (a) follows from the fact that $W$ is automorphic-symmetric. We deduce that
\begin{align*}
I(W[f(H)])&=I(X\bmod f(H);Y)\\
&\stackrel{(b)}{=}I(f^{-1}(X\bmod f(H));\pi_f^{-1}(Y))\\
&\stackrel{(c)}{=}I(X\bmod H;Y)=I(W[H]),
\end{align*}
where $(b)$ follows from the fact that $f:G/H\rightarrow G/f(H)$ and $\pi_f:\mathcal{Y}\rightarrow\mathcal{Y}$ are bijections. (c) follows from Equation \eqref{eqAutSym}.

Since $|G/f(H)|=|G/H|$ and $I(W[f(H)])=I(W[H])$, $W$ is $\delta$-determined by $H$ if and only if $W$ is $\delta$-determined by $f(H)$.
\end{proof}

\begin{proposition}
\label{propUnique}
There exists $\delta_0>0$ such that for every channel $W$ with input alphabet $G$, if $W$ is $\delta_0$-determined by a subgroup $H$ of $G$, then $H$ is the only subgroup $\delta_0$-determining $W$ (i.e., there is no subgroup $H'$ other than $H$ such that $W$ is $\delta_0$-determined by $H'$).
\end{proposition}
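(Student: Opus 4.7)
The plan is to translate the ``$\delta_0$-determined'' conditions into high-probability recoverability statements via conditional entropy, and then exploit the fact that the pair $(X \bmod H, X \bmod H')$ determines $X \bmod (H \cap H')$ to bound the information from below by something involving $|H \cap H'|$. If $H \ne H'$, then $H \cap H'$ is strictly smaller than one of them, producing a quantitative jump of at least $\log 2$ that cannot coexist with the upper bound $I(W) < \log|G/H| + \delta_0$ when $\delta_0$ is small.

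More concretely, I would take $\delta_0 < (\log 2)/3$ and argue as follows. Suppose $W$ is $\delta_0$-determined by two subgroups $H$ and $H'$ of $G$. Let $X$ be uniform on $G$ and $Y$ the output of $W$ on input $X$; then $X \bmod H$ is uniform on $G/H$, so
\begin{equation*}
H(X \bmod H \mid Y) = \log|G/H| - I(X \bmod H; Y) < \delta_0,
\end{equation*}
and likewise $H(X \bmod H' \mid Y) < \delta_0$. The canonical homomorphism $G \to G/H \times G/H'$ has kernel $H \cap H'$, so $X \bmod (H \cap H')$ is a deterministic function of $(X \bmod H, X \bmod H')$. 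Subadditivity of conditional entropy then gives
\begin{equation*}
H(X \bmod (H \cap H') \mid Y) \le H(X \bmod H \mid Y) + H(X \bmod H' \mid Y) < 2\delta_0.
\end{equation*}

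Using that $X \bmod (H \cap H')$ is uniform on $G/(H \cap H')$, this yields
\begin{equation*}
I(X \bmod (H \cap H'); Y) > \log|G/(H \cap H')| - 2\delta_0.
\end{equation*}
On the other hand, by the data-processing inequality and the $\delta_0$-determined hypothesis for $H$,
\begin{equation*}
I(X \bmod (H \cap H'); Y) \le I(X; Y) = I(W) < \log|G/H| + \delta_0.
\end{equation*}
Combining the two gives $\log(|H|/|H \cap H'|) < 3\delta_0 < \log 2$, forcing $|H|/|H \cap H'| = 1$, i.e.\ $H \subseteq H'$. Swapping the roles of $H$ and $H'$ (using the $\delta_0$-determined hypothesis for $H'$ in the upper bound) yields $H' \subseteq H$, hence $H = H'$.

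There is no real obstacle here; the only substantive observation is that $(X \bmod H, X \bmod H')$ determines $X \bmod (H \cap H')$, which turns the two ``$X \bmod H_i$ is nearly determined by $Y$'' statements into a single nearly-determined statement for the finer quotient. Once that is in place, the combination with the upper bound $I(W) < \log|G/H| + \delta_0$ automatically produces the discrete jump of at least $\log 2$ in $\log|G/(H \cap H')| - \log|G/H|$ whenever $H \ne H'$, so any $\delta_0 < (\log 2)/3$ works; in particular, $\delta_0$ may be chosen independently of $W$ and of the ambient group $G$.
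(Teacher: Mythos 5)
Your proposal is correct and follows essentially the same route as the paper: the same entropy bounds $H(X\bmod H\mid Y)<\delta_0$, the same observation that $(X\bmod H,X\bmod H')$ determines $X\bmod(H\cap H')$, the same subadditivity and data-processing combination, and the same discrete-index argument forcing $H\cap H'=H$; the paper fixes $\delta_0=\tfrac{1}{3}\log 2$ while you take any $\delta_0<\tfrac{1}{3}\log 2$, which is immaterial.
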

\begin{proof}
Define $\displaystyle\delta_0=\frac{1}{3}\log 2$. Assume that there are two subgroups $H_1$ and $H_2$ such that $W$ is $\delta_0$-determined by both $H_1$ and $H_2$.

Let $X$ be a random variable uniformly distributed in $G$ and let $Y$ be the output when $X$ is the input. We have
\begin{align*}
I(W[H_1])&=I(X\bmod H_1;Y)\\
&=H(X\bmod H_1)-H(X\bmod H_1|Y)\\
&=\log|G/H_1|-H(X\bmod H_1|Y).
\end{align*}
Therefore,
\begin{align*}
H(X\bmod H_1|Y)=\log|G/H_1|-I(W[H_1])\stackrel{(a)}{<}\delta_0,
\end{align*}
where (a) follows from the fact that $W$ is $\delta_0$-determined by $H_1$. Similarly, we can show that $H(X\bmod H_2|Y)<\delta_0$.

Now since there is a one-to-one correspondence between $(X\bmod H_1\cap H_2)$ and $(X\bmod H_1,X\bmod H_2)$, we have
\begin{align*}
H(X&\bmod H_1\cap H_2|Y)\\
&=H(X\bmod H_1,X\bmod H_2|Y)\\
&\leq H(X\bmod H_1|Y)+H(X\bmod H_2|Y)<2\delta_0.
\end{align*}
Therefore,
\begin{equation}
\label{eqeq1}
\begin{aligned}
I(W)&=I(X;Y)\geq I(X\bmod H_1\cap H_2;Y)\\
&=H(X\bmod H_1\cap H_2)-H(X\bmod H_1\cap H_2|Y)\\
&>\log|G/(H_1\cap H_2)|-2\delta_0\\
&=\log|G| - \log|H_1\cap H_2|-2\delta_0.
\end{aligned}
\end{equation}
Now since $W$ is $\delta_0$-determined by $H_1$, we have
\begin{align*}
I(W)-\log|G/H_1|<\delta_0,
\end{align*}
hence
\begin{equation}
\label{eqeq2}
I(W)<\log|G|-\log|H_1|+\delta_0.
\end{equation}
By combining Equations \eqref{eqeq1} and \eqref{eqeq2}, we get
$$\log\frac{|H_1|}{|H_1\cap H_2|}<3\delta_0=\log 2,$$
which implies that $\displaystyle|H_1\cap H_2|>\frac{|H_1|}{2}$. On the other hand, since $H_1\cap H_2$ is a subgroup of $H_1$, we have either $H_1\cap H_2=H_1$ or $|H_1\cap H_2|\leq\frac{1}{2}|H_1|$. Therefore, $H_1=H_1\cap H_2$ and so $H_1\subset H_2$. Similarly, we can show that $H_2\subset H_1$. Hence $H_1=H_2$.

We conclude that $W$ is $\delta_0$-determined by at most one subgroup of $G$.
\end{proof}

\begin{lemma}
\label{lemSecondProp}
Let $\delta_0$ be as in Proposition \ref{propUnique}. Automorphic-symmetric channels cannot be $\delta_0$-determined by a subgroup that is non-characteristic.
\end{lemma}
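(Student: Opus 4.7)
The plan is to combine the two results immediately preceding this lemma, namely Lemma \ref{lemDetAut} (automorphisms preserve $\delta$-determination for automorphic-symmetric channels) and Proposition \ref{propUnique} (at the threshold $\delta_0 = \frac{1}{3}\log 2$, the $\delta_0$-determining subgroup is unique when it exists). The statement is essentially the contrapositive of the conjunction of these two facts.

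Concretely, I would argue as follows. Suppose $W:G\longrightarrow\mathcal{Y}$ is automorphic-symmetric and $\delta_0$-determined by some subgroup $H$ of $G$. Pick any automorphism $f:G\rightarrow G$. By Lemma \ref{lemDetAut}, $W$ is also $\delta_0$-determined by $f(H)$. But $\delta_0$ is the constant furnished by Proposition \ref{propUnique}, which guarantees that at most one subgroup can $\delta_0$-determine a given channel. Hence $f(H)=H$. Since $f$ was arbitrary, $H$ is fixed by every automorphism of $G$, i.e., $H\in\mathcal{H}_{ch}(G)$. The contrapositive is exactly the statement of the lemma.

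There is essentially no obstacle here; both ingredients have already been established, and the argument is a one-paragraph deduction. The only thing to be slightly careful about is the matching of constants: Lemma \ref{lemDetAut} works for every $\delta>0$, so applying it at $\delta = \delta_0$ is unproblematic, and Proposition \ref{propUnique} gives uniqueness at precisely this $\delta_0$. No additional calculation or new idea is needed, so the proof will be short and clean.

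Once this lemma is in place, combined with Lemma \ref{lemAutPol1Step} (closure of automorphic-symmetric channels under $W\mapsto W^\pm$), the family of automorphic-symmetric channels satisfies both requirements of an $\mathcal{H}_{ch}(G)$-polarizing family, so Proposition \ref{propMain} will yield the announced $\mathcal{H}_{ch}(G)$-polarization of every automorphic-symmetric channel.
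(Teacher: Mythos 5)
Your argument is correct and is essentially the paper's own proof: the paper also combines Lemma \ref{lemDetAut} with Proposition \ref{propUnique}, merely phrasing the deduction as a contradiction (take $f$ with $f(H)\neq H$ and contradict uniqueness) rather than as the contrapositive. No difference in substance.
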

\begin{proof}
Let $W$ be an automorphic-symmetric channel. Assume that $W$ is $\delta_0$-determined by a non-characteristic subgroup $H$.

Since $H$ is non-characteristic, there exists an automorphism $f$ of $G$ such that $f(H)\neq H$. Lemma \ref{lemDetAut} implies that $W$ is $\delta_0$-determined by $f(H)$. This contradicts Proposition \ref{propUnique}.
\end{proof}

\begin{theorem}
\label{theMain}
Automorphic-symmetric channels form an $\mathcal{H}_{ch}(G)$-polarizing family of channels.
\end{theorem}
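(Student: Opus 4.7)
The plan is to verify directly the two bullet points in the definition of $\mathcal{H}$-polarizing family of channels for $\mathcal{W}$ = set of automorphic-symmetric channels with respect to $(G,+)$ and $\mathcal{H} = \mathcal{H}_{ch}(G)$. Everything needed has essentially been set up in the preceding lemmas, so the theorem reduces to assembling them.

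First I would handle closure under the polar transforms. This is exactly Lemma \ref{lemAutPol1Step}: if $W$ is automorphic-symmetric then so are $W^-$ and $W^+$, via the explicit bijections $\pi_f^-$ and $\pi_f^+$ built from $\pi_f$ and $f$. By induction, $W^s$ stays automorphic-symmetric for every $s \in \{-,+\}^n$, which gives the first condition.

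Next I would exhibit the constant $\delta_{\mathcal{W},\mathcal{H}_{ch}(G)}$. The natural choice is $\delta_0 = \tfrac{1}{3}\log 2$ from Proposition \ref{propUnique}. Lemma \ref{lemSecondProp} says exactly that no automorphic-symmetric channel is $\delta_0$-determined by a non-characteristic subgroup: indeed, if it were $\delta_0$-determined by $H$ with $f(H) \neq H$ for some automorphism $f$, then Lemma \ref{lemDetAut} forces it to also be $\delta_0$-determined by $f(H)$, violating uniqueness from Proposition \ref{propUnique}. So setting $\delta_{\mathcal{W},\mathcal{H}_{ch}(G)} := \delta_0$ makes the second condition hold with $\mathcal{H} = \mathcal{H}_{ch}(G)$.

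With both conditions verified, the family of automorphic-symmetric channels is an $\mathcal{H}_{ch}(G)$-polarizing family, which is the content of the theorem. There is no genuine obstacle here: the theorem is essentially a one-line combination of Lemmas \ref{lemAutPol1Step} and \ref{lemSecondProp}, and all the real work lies in those two lemmas (and Proposition \ref{propUnique}, which supplies the uniqueness behind Lemma \ref{lemSecondProp}). If I wanted to emphasize anything, it would be to remark that Proposition \ref{propMain} then immediately yields the corollary that every automorphic-symmetric channel $\mathcal{H}_{ch}(G)$-polarizes, and in particular that if $G$ has no non-trivial characteristic subgroup then every automorphic-symmetric channel strongly polarizes.
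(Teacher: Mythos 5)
Your proposal is correct and matches the paper's proof exactly: the theorem is obtained by combining Lemma \ref{lemAutPol1Step} (closure under $W\mapsto W^-,W^+$) with Lemma \ref{lemSecondProp}, taking $\delta_{\mathcal{W},\mathcal{H}_{ch}(G)}=\delta_0$ from Proposition \ref{propUnique}. Your write-up simply makes explicit the assembly that the paper leaves as a one-line citation.
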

\begin{proof}
The theorem follows from Lemmas \ref{lemAutPol1Step} and \ref{lemSecondProp}.
\end{proof}

\vspace*{3mm}

Theorem \ref{theMain} shows that the synthetic channels of an automorphic-symmetric channel polarize to channels that are determined by characteristic subgroups.

\begin{corollary}
\label{corMain}
If $(G,+)$ does not contain any non-trivial characteristic subgroup, then the family of automorphic-symmetric channels is strongly polarizing.
\end{corollary}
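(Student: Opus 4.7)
The plan is to derive this as an essentially immediate specialization of Theorem \ref{theMain}. The only real content is the set-theoretic identity $\mathcal{H}_{ch}(G)=\{\{0\},G\}$ under the stated hypothesis, so I would organize the argument around verifying this identity and then quoting Theorem \ref{theMain}.

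First I would observe that the two trivial subgroups are always characteristic. Any automorphism $f:G\to G$ is a group homomorphism, hence must send the identity to the identity, giving $f(\{0\})=\{0\}$; and $f$ is bijective onto $G$, giving $f(G)=G$. Therefore $\{0\},G\in\mathcal{H}_{ch}(G)$ for every $(G,+)$. Combined with the hypothesis that $G$ has no non-trivial characteristic subgroup, this yields $\mathcal{H}_{ch}(G)=\{\{0\},G\}$.

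Next I would apply Theorem \ref{theMain}, which states that the family of automorphic-symmetric channels is $\mathcal{H}_{ch}(G)$-polarizing. Substituting $\mathcal{H}_{ch}(G)=\{\{0\},G\}$ shows that this family is $\{\{0\},G\}$-polarizing, which is exactly the definition of a strongly polarizing family.

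There is no real obstacle to overcome: the corollary is a direct specialization of Theorem \ref{theMain}, and the only non-trivial bookkeeping is the trivial check that $\{0\}$ and $G$ are themselves characteristic. One might, as a short remark afterwards, combine with Proposition \ref{propMain} to note that every individual automorphic-symmetric channel then strongly polarizes, but this is not required by the statement of the corollary itself.
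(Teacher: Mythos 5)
Your proposal is correct and matches the paper's approach: the paper also derives the corollary directly from Theorem \ref{theMain}, with your explicit check that $\{0\}$ and $G$ are always characteristic (hence $\mathcal{H}_{ch}(G)=\{\{0\},G\}$ under the hypothesis) being exactly the routine verification the paper leaves implicit.
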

\begin{proof}
The corollary follows from Theorem \ref{theMain}.
\end{proof}

\begin{corollary}
\label{corMain2}
If $(G,+)$ does not contain any non-trivial characteristic subgroup, then all automorphic-symmetric channels strongly polarize. In particular, all generalized $q$-symmetric channels strongly polarize.
\end{corollary}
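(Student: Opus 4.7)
The plan is to assemble this from pieces that have already been established earlier in the paper, so the proof will be essentially a bookkeeping argument rather than a new technical step. The key observation I would start from is that the trivial subgroups $\{0\}$ and $G$ are always characteristic, since every automorphism $f$ of $G$ satisfies $f(0)=0$ (homomorphisms send identity to identity) and $f(G)=G$ (automorphisms are surjective). Thus $\{\{0\},G\}\subseteq\mathcal{H}_{ch}(G)$ unconditionally.

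Next I would invoke the hypothesis: if $G$ has no non-trivial characteristic subgroup, then combined with the above inclusion we get the equality $\mathcal{H}_{ch}(G)=\{\{0\},G\}$. At this point the first claim reduces to a tautology chain. By Theorem \ref{theMain}, the family of automorphic-symmetric channels is $\mathcal{H}_{ch}(G)$-polarizing, which under our hypothesis means it is $\{\{0\},G\}$-polarizing; by the second clause of the definition of strongly polarizing family, this is exactly what it means for the family to be strongly polarizing.

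For the statement about individual channels, I would apply Proposition \ref{propMain} to this strongly polarizing family: every channel belonging to a strongly polarizing family strongly polarizes. Hence each automorphic-symmetric channel strongly polarizes. The ``in particular'' clause then follows by citing Proposition \ref{propSymIsAut}, which establishes that generalized $q$-symmetric channels are automorphic-symmetric with respect to $(G,+)$, so they inherit strong polarization from the general statement.

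There really is no main obstacle here since all the heavy lifting has been done: the nontrivial content sits in Theorem \ref{theMain} (which uses Lemmas \ref{lemAutPol1Step} and \ref{lemSecondProp}) and in Proposition \ref{propSymIsAut}. The only subtle point worth checking explicitly in the writeup is the elementary fact that $\{0\}$ and $G$ are characteristic, so that the hypothesis ``no non-trivial characteristic subgroup'' collapses $\mathcal{H}_{ch}(G)$ to precisely the two-element set required by the definition of strong polarization.
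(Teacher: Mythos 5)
Your proposal is correct and follows essentially the same route as the paper: the paper's proof simply cites Corollary \ref{corMain} (which is Theorem \ref{theMain} plus the observation that the hypothesis forces $\mathcal{H}_{ch}(G)=\{\{0\},G\}$), Proposition \ref{propMain}, and Proposition \ref{propSymIsAut}, exactly the chain you assemble. Your explicit verification that $\{0\}$ and $G$ are always characteristic is a worthwhile detail the paper leaves implicit, but it does not change the argument.
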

\begin{proof}
The corollary follows from Corollary \ref{corMain}, Proposition \ref{propMain} and Proposition \ref{propSymIsAut}.
\end{proof}

\begin{example}
If $G\equiv\mathbb{F}_p^r$ for a prime $p$, then every non-trivial subgroup is non-characteristic. In this case, every automorphic-symmetric channel strongly polarizes.
\end{example}

\begin{example}
If $\displaystyle G=\prod_{i=1}^n \mathbb{F}_{p_i}^{r_i}$, where $p_1,\ldots,p_n$ are prime numbers, the reader can check that the characteristic subgroups of $(G,+)$ are those of the form $\displaystyle H=\prod_{i=1}^n\mathbb{F}_{q_i}^{l_i}$, where $l_i=0$ or $l_i=r_i$ for every $1\leq i\leq n$.

Therefore, if $\displaystyle G=\prod_{i=1}^n \mathbb{F}_{p_i}^{r_i}$ and $W$ is automorphic-symmetric, the polarization levels of $W$ are determined by subgroups of the form $\displaystyle H=\prod_{i=1}^n\mathbb{F}_{q_i}^{l_i}$, with $l_i=0$ or $l_i=r_i$ for every $1\leq i\leq n$.
\end{example}

\section{Discussion}
If $G\equiv\mathbb{Z}_q$ with composite $q$, then $G$ contains non-trivial characteristic subgroups, so we cannot apply Corollary \ref{corMain2}. Nevertheless, the simulations in \cite[Section V]{GulcuYeBarg} suggest that $q$-symmetric channels strongly polarize when the group $\mathbb{Z}_q$ is used. Proving this remains an open problem.

\section*{Acknowledgment}
I would like to thank Emre Telatar, Min Ye and Alexander Barg for helpful discussions.

\bibliographystyle{IEEEtran}
\bibliography{bibliofile}

\end{document}